\pdfminorversion=6

\documentclass[journal]{IEEEtran}

\raggedbottom
\usepackage[english]{babel}
\usepackage[T1]{fontenc}
\usepackage[utf8]{inputenc}

	\usepackage[pdftex]{graphicx}	
	\DeclareGraphicsExtensions{.pdf,.png,.jpg, .JPG}

\graphicspath{ {./imag/22-10-17/} } 

\usepackage{amsmath,amssymb,amsthm}
\usepackage{amsfonts}
\usepackage{setspace}
\usepackage{cite}
\usepackage{booktabs}
\usepackage{array}

\usepackage{listings}
\usepackage{xcolor}

\usepackage{verbatim} 
\usepackage{float}
\usepackage{bm}
\usepackage{algorithm}
\usepackage[noend]{algpseudocode}
\usepackage{hyperref}

\definecolor{light-gray}{gray}{0.95}
\definecolor{gray}{gray}{0.5}

\lstset{ 
	frame=Ltb,
     framerule=0pt,
     aboveskip=0.5cm,
     framextopmargin=3pt,
     framexbottommargin=3pt,
     framexleftmargin=0.4cm,
     framesep=0pt,
     rulesep=.4pt,
     backgroundcolor=\color{light-gray},
     rulesepcolor=\color{black},
     stringstyle=\ttfamily,
     showstringspaces = false,
     basicstyle=\small\ttfamily,
     commentstyle=\color{gray},
     keywordstyle=\bfseries,
     numbers=left,
     numbersep=15pt,
     numberstyle=\tiny,
     numberfirstline = false,
     breaklines=true,
     captionpos=t,	
   }
 
\lstnewenvironment{listing}[1][]
   {\lstset{#1}\pagebreak[0]}{\pagebreak[0]}
 
 
\lstdefinestyle{matlab}
   {language=matlab,
   }

\newcommand{\bs}[1]{\boldsymbol{#1}}

\newcommand{\etal}{\emph{et al.}}

\newcommand{\ie}{\emph{i.e.}}

\newtheorem{assumption}{Assumption}

\newtheorem{proposition}{Propositon}

\listfiles

\begin{document}


\title{\textsc{Mechanism Design for Demand Response Programs with financial and non-monetary (social) Incentives}}
\author{~Mateo~Alejandro~Cortés~Guzmán and Eduardo~Mojica-Nava}

\markboth{DRAFT, for submission to the IEEE Transactions on Smart Grids, \today}%
{Shell \MakeLowercase{\textit{et al.}}: Bare Demo of IEEEtran.cls for Journals}

\maketitle

\begin{abstract}
	
	Most demand management approaches with non-mandatory policies assume full users' cooperation, which may not be the case given users' beliefs, needs and preferences. In this paper we propose a mechanism for demand management including incentives both with and without money. The mechanism is validated by means of simulation, modeling the consumers as a finite multiagent system which evolves until a stable state, and social incentives diffusion using opinion dynamics.
	
\end{abstract}

\begin{IEEEkeywords}
mechanism design, demand management, game theory, incentives, opinion dynamics, preferences.
\end{IEEEkeywords}

\section{Introduction}

\IEEEPARstart{D}emand side management---DSM\cite{Samad2016}, understood with is broader connotation enclosing Demand Response, energy efficiency and another ways of modifying electricity demand --- is nowadays a widely discussed topic since optimization of the power system infrastructure and electricity use arose as topic of interest aiming to reduce climatic change and environmental impact of energy generation. In particular, the problem of demand response--DR-- has been approached from various perspectives as the technical side, involving direct load control, the financial side, with differential tariffs --day/night, hourly--, the informational side, with the inclusion of interactivity with the consumers through smart grids and from the game theory perspective, considering the problem as a Stackelberg game\cite{Stackelberg2012,Stackelberg2014,Stackelberg2016} where the service provider is a leader and the consumers are price takers, following leader's indication to maximize their profit.

Fotouhi Ghazvini \etal \cite{IncentivesFotouhiGhazvini2015} present an incentive-based demand response programs approach for retail electricity providers with DG--distributed generation assets and energy storage systems, starting from the fact that for some countries, there are financial risks regarding the price variations due to contracts made when buying energy for a price that is then sold for a smaller value. They formulate a two-stage scenario-based stochastic problem which aims to establish the required financial incentive and the optimal dispatch of DG units and energy storage systems usage--for the day-ahead market--, thus minimizing the ERP risk of financial losses by reducing power peaks--when electricity prices are higher-- Even if their rationale is not fully applicable in a country like Colombia\footnote{Where the electricity retailer has a guaranteed retribution for the energy it sells.} where the domestic users are regulated, the concept of including stochastic decision parameters can be adapted and applied to the particular case, adressing the uncertainty about consumers' preferences with the incorporation of stochastic decision parameters into their utility function.  They mention\cite{IncentivesFotouhiGhazvini2015} the difficulty of knowing electricity consumers' behavior and translating qualitative valuations to numeric variables that can be computed when designing DR programs. This paper addresses the problem by modeling the consumers as a multi-agent system --MAS--, and using opinion dynamics to model the social interaction and the evolution of users' preferences and beliefs, starting from an initial condition.



Thimmapuram \etal\cite{priceElasticityThimmapuram2013} show an analysis of the effect of price elasticity on deregulated markets, where they make assumptions about the consumers characterization and their responsiveness to electricity price variations and use a MAS approach\footnote{Using EMCAS software, a tool developed by the US Argonne National Laboratory for electric market related studies.} to model the adaptation process of the users. They state the inherent difficulty and complexity of measuring price elasticity, and estimations usually have high uncertainty. According to their recopilation of several elasticity studies, for residential costumers in the tradditional setting--without smart meters-- the price-load curve has a variable steep between $ -0.06 $ and $ -0.49 $ and in their simulation experiment, the achieved load reduction by changing price was about 5$\%$.


Regarding game theory and mechanism design approach, Nekouei \etal\cite{neko2015} have proposed a mechanism for demand management both at energy-market and small-consumers level. The former is treated as a Stackelberg game where generators are followers, and the latter is tackled with a VCG-like mechanism, ensuring consumers to reveal their true type--preferences--. The implementation is thought for application in South Australia, as for the power system structure and behavior. Haring \etal\cite{incentive_compatible2015} propose an incentive compatible mechanism as solution for the power supply in imbalance conditions requiring the use of ancillary services--AS--, treating the matter from the global perspective of the energy-market, without directly considering consumers' role but their aggregated demanded energy and power-flow in the transmission system.\footnote{110 kV and higher} The balancing energy parties--BRP--- make the resource allocation based on their willingness to pay, modeled as a polynomial function depending on the energy demanded from AS during a period. They maximize transmission system operator's utility, as well as BRP assets.

Mhanna \etal\cite{faithfulMechanismMhanna2016} propose a distributed mechanism for day-ahead residential demand scheduling, making an effort to get a realistic implementation of the designed mechanism. They address four common assumptions: first, the treatment of household energy levels as continuous\footnote{Continuous energy levels imply a convex optimization problem, hence easier to solve.}, when they are actually a mix of continuous and discrete levels. Mhana \etal~approximate the nonconvex problem by using mixed integer linear programming--MILP and an existing MILP solver.  Second, the representation of a household's--user-- valuation function by a concave increasing function\footnote{Following the diminishing marginal utility principle.}, which they assume rather constant and large for any device consumption profile--schedule-- that satisfies the user's preferences. In the present work, these assumption is adopted and modified, taking an evolving value for an user's valuation of her electricity consumption, modeled by means of opinion dynamics. Third, for the payment schemes based on an energy allocation that will be fully accurate, while it might not be the case; they propose a scoring rule-based payment which takes into account both day-ahead allocation and actual consumption, rewarding the accuracy of meeting the allocations. Finally, the implicit centralized implementation approach of mechanism design, where they take a distributed approach, in order to relax the computation requirements of a traditional centralized approach. With the previous assumptions, they get a mechanism that is asymptotically dominant incentive compatible, weakly budget balanced and fair in the charges to users--households--.


As electricity supply is in most places--if not all-- a public service and therefore must be granted to all inhabitants, enforcing policies for consumption change or mandatory taxes to higher consumptions are not really a practical option and there is no point in trying to reach an optimal use of the grid if consumers are not aware of it and willing to collaborate. Tackling this issue social networks models come to action, as consumers could be modeled as an information network of opinions reflecting users level of agreement to a certain measure or policy targeting DR. Furthermore, the DR program proposed to users can be not only the acceptance of financial incentives, but to include some policies with an environmental or social background, which requires users participation modifying their consumption according to a reference signal indicating if lower or higher consumption is required, and should be voluntarily followed appealing to moral conscience and empathy from people. 

By modeling the information layer of the users as a graph, opinion dynamics approach can be used to study users' beliefs evolution, constraining the demand response programs and incentives efficiency to a more realistic outcome considering that not all the society should actively participate in a certain DR program. A former approach of including social incentives for DR programs design is presented by Mojica-Nava \etal\cite{mojica2015opinion}, where users are assumed to follow a leader or prominent agent--which imposes the desired behavior-- and deviate from their natural preferences proportionally to the leader's influence exerted to each agent.

The foundation for including social incentives as modifiers for consumers behavior comes from its relationship with morality, discussed by several studies of altruist behavior in a society, giving some evidence that the actual behavior of people is not fully commanded by selfish principles, as stated by Adam Smith \cite{Gintis2005}, or as the evidence suggested by the recent study of Crockett \etal \cite{crockett2017moral} about moral behavior when we have a choice of whether harm or not to the fellow.

\section{Opinion Dynamics model for Domestic Electricity Consumers Characterization }

Graph theory has the tools to model the interactions between a group of agents whom may have some kind of relationship, and has been extensively used to model different kind of systems and networks, like internet, power systems--for a region or a country--, the network of scientific papers (citations) or interactions in social networks, just to name a few. As it would not be feasible to take the information of interactions and influences only from actual electricity consumers, the fact that many of the above mentioned real world networks can be modeled as a \textit{small world} network\footnote{An special type of random network.} is taken as base to model the interaction of users' opinion concerning the adoption or rejection of a DR oriented measure as a small-world random graph. Small-world networks are characterized for having a short average path between nodes and high clustering coefficient than classic random networks\cite{networks2011prettejohn}, while most nodes are connected to just a few neighbours, but also have a path to reach any other node in the network.

There are two well-known models for generating small-world networks that have been compared and validated with empirical data: the Watts-Strogatz--WS-- model, which reflects adequately the clustering seen in real networks, and the Barabási-Albert--BA-- model, in which there is a starting seed network which is growth using the \textit{preferential attachment} rule until the network desired size is reached. The BA model reflects better the degree distribution across the network, but does not represent clustering as good as WS model. In the application case presented in this paper, WS model is used.





If we represent the information exchange between a group of people with a graph, we can derive some analytical formulation of the group's interaction, namely \textit{opinion dynamics}, which aims to model the way interactions take place in a given group and how people's opinion about certain topic may evolve, and predict the actual choices made in the group.

One of the first opinion dynamics models is the one devised by DeGroot in 1974\cite{degroot1974consensus}, with his approach to reach a consensus of opinions between a group of persons considering the weighted adjacency matrix as an influence network. Friedkin and Johnsen in 1990\cite{friedkin1990social}, studied the topic as social influence network theory. Years later they came to propose a model\cite{friedkin1999social} of the interactions of a group of $n$ persons, taking into consideration the susceptibility to change of each individual, as well as the influence of every neighbor's opinion in an weighted average profit function that ultimately led to consensus. Their model is still the core for more recent social networks modeling, and unlike most other models, it has been validated in small and medium population groups for several problems\cite{tutorialSNProskurnikov2017}. Hegselmann and Krause\cite{hegselmann2002opinion} proposed a modified model taking into account a confidence factor, which limited the group of neighbors whose opinion influenced one person's particular beliefs, to those within a similarity range called \textit{bounded confidence}, maintaining the averaged opinion calculation with the neighbors in the bounded confidence subset. More recently, Mirtabatabaei \etal\cite{bullo_stubborn} have proposed a modification of the Friedkin-Johnsen's model, including coefficients to characterize various parameters of the social interaction of a group of individuals discussing a sequence of issues. The basic FJ dynamic as used by \cite{bullo_stubborn} is shown in eq.\ref{eq:bullo_stubborn},

\begin{equation}\label{eq:bullo_stubborn}
\begin{split}
\sigma_i(t + 1) = (1 - \mu_i)\sigma_i(0) + \mu_iA_{ii}\sigma_i(t) \\ + \mu_i(1 - A_{ii}) \sum_{j=1}^{n}A_{ij}\sigma_j(t) ;
\end{split}
\end{equation}

where, for the individual $i$, the term $\sigma_i(t)$ is the opinion at time $t$, $\mu_i$ the susceptibility to change her initial opinion, $A_{ii}$ the self-confidence of person about a particular issue, and $A_{ij}$ the credibility given to neighbors' opinion, contained in the  weighted adjacency matrix $A$ of thee graph, that is row stochastic with zero diagonal\cite{friedkin1999social}. \cite{bullo_stubborn} assumes some simplifications for Friedkin-Johnsen's model, regarding the coupling between prejudices and initial conditions. FJ model is proposed for both the consumer's valuation of energy and the willingness to enroll in a certain DR program that encourages users with alternate--non monetary-- benefits.

\section{MAS model of the Domestic Electricity Consumers} 

Taking a perspective from distributed control, electricity demand side can be modeled by means of evolutionary game theory\cite{mojica2015opinion,Barreto2014mechanism,handboookGTvol4}, where a group of individuals change their state over time trying to maximize their profit. 

The detailed evolution of a population of agents, \textit{i.e} how and when they update their strategies as time passes, is called a \textit{revision protocol}. The set of a \textit{revision protocol}, which models the population dynamic, and a \textit{population game}, which models the strategic environment for the players, define a stochastic evolutionary process\cite{sandholm2010population,handboookGTvol4}, that, according to the law of large numbers\footnote{Which states that given a large number of agents the noise due to extreme preferences is eliminated in the average behavior.}\cite{sandholm2010pairwise}, if the number of agents is very large it can derive a deterministic process called \textit{mean dynamic}.

According to Sandholm\cite{sandholm2010pairwise}, a \textit{revision protocol} $\rho^p$ is a map $\rho^p : \mathbb{R}^{n^p} \times X^p \to \mathbb{R}_+^{n^p \times n^p} $. The scalar $\rho_{ij}^p(\pi^p, x^p)$ is called conditional switch rate, from strategy $i \in S^p$ to strategy $j \in S^p$, given payoff vector $\pi^p$ and population state $x^p$. When using a revision protocol, it is assumed that each agent has a Poisson alarm clock with independent ring time, at rate $R$. For each alarm ring, the agent has an opportunity for strategy revision, switching to $j\neq i$ with probability $\rho_{ij}^p/R$ and keeping strategy $i$ with probability $1- \sum _{j\neq i} \rho_{ij}^p/R$. When an agent changes her strategy, population's state changes accordingly having into account the agent's decision.

One of the most studied and used is the \textit{replicator dynamics}, which is an imitation dynamic that conditions the change of strategy of a user to a repeated comparison with the profit of other population's individuals; if the individual $i$'s payoff is lower than that of individual $j$'s, she may update her strategy to the strategy played by $j$ and continue the interactions. According to Sandholm\cite{sandholm2010pairwise} pure replicator dynamics violate Nash stationarity, which is required for the dynamics to converge in a finite time, and it is necessary to include at least an small proportion of a mixed dynamic with direct strategy selection, to ensure that there are no permanently extinct strategies. The pairwise comparison revision protocol (eq. \ref{eq:parwise comparison}) is an example of direct strategy selection, and was first proposed by Smith, who derived from it the Smith dynamics. Candidate strategy $j$ is chosen randomly from $S$.

\begin{equation}\label{eq:parwise comparison}
	\rho_{ij}(\pi) =  [\pi_j - \pi_i]_+
\end{equation}

\begin{equation}\label{eq:parwise logit}
\rho_{ij}(\pi) =  \dfrac{exp(\eta^{-1}\pi_j)}{exp(\eta^{-1}\pi_j)+exp(\eta^{-1}\pi_i)}
\end{equation}

Another more complex example are the exponential revision protocols, like the \textit{pairwise logit} protocol, shown in eq.~\ref{eq:parwise logit}. For considerations of stochastic stability detailed in section~\ref{s: convergence}, this protocol is chosen in the MAS for modeling the strategy revision and update process that generates the dynamic in the game, as users will know available strategies set and can make a direct choice, rather than copying the behavior of other consumers. However, in the implementation phase it is made a modification to the protocol, regarding the random selection of candidate strategies. In order to accelerate convergence of the dynamic, the selection is weakly restricted to a strategy that enhances current payoff, similar to the best response dynamics\cite{sandholm2010population}.

In a natural non-incentivized state, the MAS model of the society receives as input the load profiles from the users reflecting their valuation of electricity use during each consumption period, which may be grouped under a few profiles --if it was desired to model groups with similar preferences-- or individual independent profiles for each consumer, and starting from an initial energy distribution, after the population dynamics evolution the model outputs a reconstruction of the load profile (see fig.~\ref{fig: power_allocation_noInc}).  With this setting, the natural state of the population can be altered through incentives.

\section{Mechanism design and incentives}

The demand response problem using mechanims design and population dynamics has already been adressed by other authors like Barreto \etal \cite{Barreto2014mechanism}, and the presented mechanism was at first based on their work. They propose an indirect revelation mechanism, based in VCG\footnote{Vickrey-Clarke-Grooves}\cite{handboookGTvol4}, incentive compatible and achieving pareto optimality if incentives are used during the transition period.

However, because of the difficulties that arose in the modeling of real-world users and their appliances, the mechanism implementation was changed by using a finite-population setting and revision protocols directly rather than performing analysis using the corresponding mean dynamic. The proposed mechanism is incentive compatible with indirect revelation.

\subsection{Game Setting}

Let a population $P$ be formed by a group of users--households--. Each user $i$ has a set of $N$ electricity powered devices, and some usage preferences $\theta_i\in \Theta $ for a number of consumption periods $T$, that can be--or not-- different for each user and each period. Preferences $\Theta$ are related to a set of strategies $S = {1,2,...\Sigma}$ for each consumption period, associated to the proportion of the period that an user prefers to use a device, \textit{i.e.} consume an specified amount of energy. Each user tries to maximize her utility, and the interaction within the whole society can be stated as a population game, where the fitness function for each user is her utility function, with dynamics led by the \textit{pairwise logit protocol}.

The utility function for each user $i$ can be described as her benefit of using the electricity minus its cost:

\begin{eqnarray}
\label{utilidad}
U_i = v_i - c_i\\
U_i = \sum_{n \in N}\alpha_i \vartheta_i^n - \beta(Q_{t}) q_i^n\\
U_P = \sum_{i \in P} U_i
\end{eqnarray}

Where $Q_{t} = \sum_{i \in P} q_i$ and $q_i$ is the energy consumption of user $i$ for a period, meaning that the cost for user $i$ depends on the electricity usage of the whole population. The electricity valuation $\alpha_i = f\beta\sigma_i$ where $f$ is an arbitrary factor. $\vartheta_i^n = \exp \left(- \dfrac{\lvert x_i - \theta_i\rvert}{\epsilon_i} \right)$ signals if the $i^{th}$ user consumption is according to her preference for device $n$, for a given period. Utility for each user-device combination has the form shown in fig.~\ref{f: utility_exp}, where the maximum surplus is obtained when the consumption matches her preference $\theta_i$ for the specific device. $\epsilon$ is an elasticity coefficient meaning how stiff is her consumption preference, and the total utility for each user is the sum of the utilities for the devices set.

\begin{figure}
	\begin{center}
		
		\includegraphics[width=0.45\textwidth]{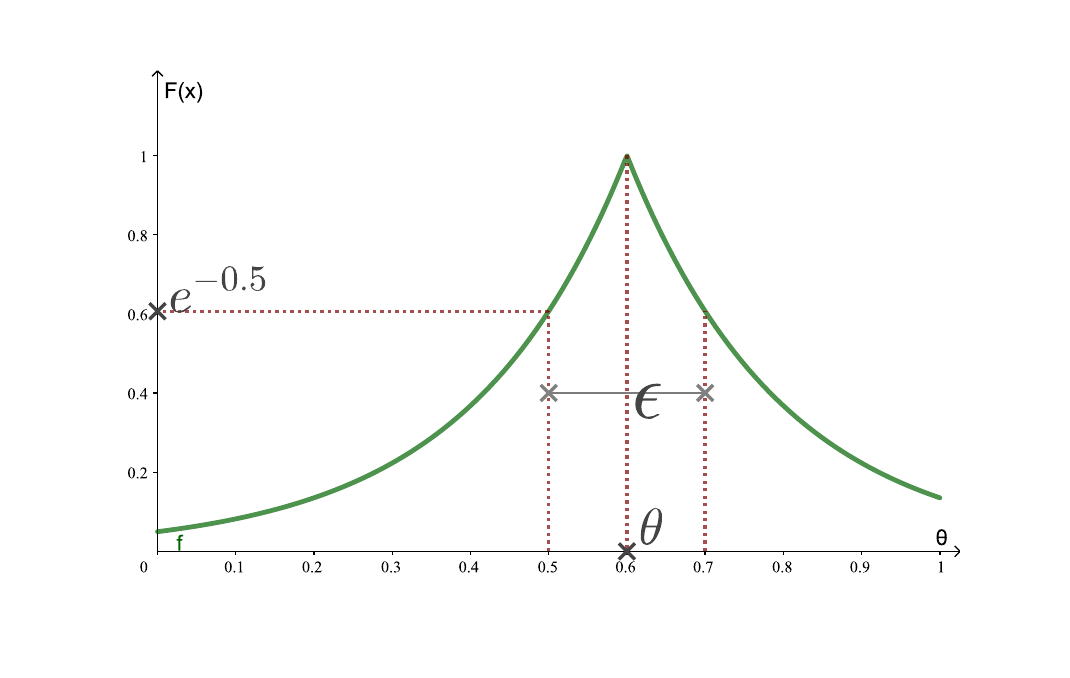}
		\caption{Example payoff range for the available strategies set.}
		\label{f: utility_exp}
	\end{center}
\end{figure}

The maximization problem solved for each period using and evolutionary approach can be stated as in eq.~\ref{eq:opt_problem}. As it is shown in eq. \ref{utilidad}, the users' profit depends of their accuracy following their own consumption preferences, and the surplus amount depends on the subjective valuation given by each user to the usage of the devices set they may own, with diverse functionality like TVs, AC units, PCs and so on. As users' preferences are private information, and even a given user may not be fully aware and conscious of her own, it is not feasible to know their true value and include it to the model as part of an utility function that reflects actual consumer behavior. It can be known, however, that in a population there will always be a wide range for these preferences where the minimum will be slightly higher than the consumed electricity price and the maximum will be many times the price paid for the electricity. Thus, population's valuation is a real vector $\alpha \in [c<\alpha_{min}, \alpha_{max}]$ shaped by the users' subjective opinion.

\begin{equation}\label{eq:opt_problem}
\begin{aligned}
& \underset{\bs{q}}{\text{maximize}}
& & U_P(\theta, q) =  \sum_{i \in P}\left( \sum_{n \in N}\alpha_i^n \vartheta_i^n - \beta(Q_{t}) q_i^n \right) \\
& \text{subject to}
& & q_i^n \geq 0,  i =\{1,\ldots,P\}, n =\{1,\ldots,N\}\\
& 
& & \Delta q_i^n \in \bf{Q_{devices}}\\
& 
& & \alpha_i^n \in [\beta, \alpha_{max}]\\
& 
& & \theta_i^n \in [0,1]
\end{aligned}
\end{equation}

In order to get an approximation to user preferences, it is assumed that preferences are embedded in the aggregated load profile for an user, and they can be extracted from it via virtual demand disaggregation, as in high-resolution load profiles generation. The following assumptions are considered:

\begin{assumption}
	\label{s preferencias}
	In natural state, consumption preferences for a given user form by aggregation her load profile.
\end{assumption}

\begin{assumption}
	\label{s valor_energia}
	Subjective users' valuation of energy consumption has an wide range that can be contained in a closed numeric interval with equivalent effect to the real-world diffuse valuation usually made by a person.
\end{assumption}

Assumption \ref{s valor_energia}, makes possible to propose an opinion dynamics for the evolution in time of the consumers' subjective valuation of energy, stating an starting condition (initial valuation), susceptibility and confidence--self and mutual-- parameters required by the Friedkin-Johnsen's model to compute final users' valuation state.

The population game can run in parallel for each time period, until the population state--load profile-- reaches a stable state.

\subsection{Monetary Incentives}

If we want to modify users' energy consumption, incentives must give users a higher surplus, for them to be willing to change their strategies shifting away from natural preferences. Punishment(negative) incentives are not considered in the present work, as forced changes are not desired unless there is a risk of outage or blackout. 

\begin{equation}
I \geq \alpha \vartheta - \beta(Q_{t}) q
\end{equation}

Monetary incentives can be fixed as:
\begin{equation}
	I = \gamma q_{redux}
\end{equation}

where $\gamma$ is the surplus for consuming away from preference and $q_{redux}$ the reduced energy consumption.$\gamma$ could be slightly larger than one, and its exact value when the mechanism is implemented should be derived empirically for each population under study.

Financial incentives should be constrained, as usually we don't want to reduce or increase consumption for all periods, but for those with certain characteristics, as the periods with power peaks. Depending on the specific interests of the central organizer, the periods qualifying for incentives can be set whether arbitrarily or as those exceeding certain consumption levels.

\subsection{Social (Non-monetary) Incentives}
It is likely that monetary incentives are not always worthy for domestic users considering their inherent low elasticity (Only those with very low valuation would be sensitive to changes in price). The other possibility to change energy use\footnote{By changing consumers' behavior. Another option behavior-independent is for example technology changes, making the same usage pattern (and users' utility) less energy intensive.} is by changing user's preferences directly, so they would have maximum utility by realizing the new preferences. If the initial belief(agreement) to a certain DR program, non-monetary motivated, are known, we can model its evolution using opinion dynamics, interpreting the result as a \emph{ probability of engage to the program} and shift preferences accordingly.

Social incentives are introduced in the system via opinion dynamics, associating $i^{th}$ user opinion to her probability to agree with the proposed demand response reference profile. This probability is also related to the subjective user valuation given to the alternate benefit associated with the DR program. Eqs.~\ref{eq:social incentives inicio} to \ref{eq:social incentives fin} show the incentivized utility function, where if the user decides to engage the DR program will adapt her preference until she reaches her flexibility limit, $\varepsilon$.

\begin{eqnarray}\label{eq:social incentives inicio}
\bm C \sim Be(\bm \sigma^{dr})\\
\bm \theta_{inc} = \bm \theta \pm \bm C \bm \varepsilon; \theta_{inc} \in [0,1]\\
\bm U_{i.soc} = \bm \alpha \exp \left(- \dfrac{\lvert \Delta \bm\theta_{inc}\rvert}{\epsilon}\right) -\bm \beta \bm q 
\label{eq:social incentives fin}
\end{eqnarray}


Where $\sigma^{dr}$: probability to engage with DR program. Thus, the social incentive is the difference between the natural and incentivized utility:

\begin{equation}
\bm I_{soc} = \bm U_{i.soc} - \bm U
\end{equation}

\subsection{Mechanism Properties}

Understanding the mechanism as the allocation rule and the resulting social choice, the proposed mechanism has the following properties:
\begin{itemize}
	\item Incentive compatibility (strategy proofness). As it can be seen in fig.~\ref{f: utility_exp}, the exponential utility function has an unique value peak for each given user and device preference. Thus, the dominant strategy is the one maximizing utility. 
	\item Indirect revelation, as preferences remain private and only aggregate consumption is spread.
	\item Contrary to traditional approach, social choice does not need to choose from society individual preferences, but states the energy consumption associated to the aggregated preferences.
\end{itemize}

\subsection{Convergence}\label{s: convergence}

Convergence and stability of the proposed model depends on convergence of the forming dynamics, \ie~for opinion and population. Population dynamics using pairwise comparison revision protocol satisfy both Nash Stationarity--N.S. and Positive Correlation--P.C., which are conditions that ensure long term stability for the mean dynamic. Common studied evolutionary games usually assume an infinite-population setting, which allows to average away the stochastic noise in the agent's evolution, thus allowing a deterministic behavior explained as the mean dynamic. In the finite-population setting, the agents' evolution can be seen as a Markov process $ \{X^N_t\} $ with common jump rate between states, and transition probabilities related to the payoffs of the revision protocol in use.

Finite-state Markov processes are characterized by their stationary distribution\footnote{If a process is described by this distribution at one time, it will be described by the same distribution in any other time \cite{sandholm2010population} } in very long time spans. Particularly, if the process has the property of \textit{irreducibility}, which means that the process has non-zero probability for visiting any state and will visit all possible states in finite time, then the stationary distribution is unique and does not depend on initial conditions; if it fulfills the property of \textit{reversibility}, then the stationary distribution adopts a simpler form and its calculation becomes feasible for practical cases. Only two-strategy games (under an arbitrary revision protocol) and potential games under exponential revision protocols are known to be reversible\cite{sandholm2010population, Sandholm2015, Wallace2015}.

For the present	demand management problem we can fix the revision protocol as \textit{logit} or \textit{logit pairwise} as the game always has more than two strategies but when only binary consumption levels are considered, thus meeting the first condition for known reversibility. The other condition needs to demonstrate that the game proposed is a potential game. 

When the number of agents is small, usually the effect of a particular agent's change of strategy in the population state is not negligible; this is considered in the \textit{clever payoff}\cite{sandholm2010population} evaluation, which has into account that the agent makes her comparison against the modified population.

\begin{proposition}
	The finite-population game with fitness function $$F(\Theta) = \exp \left(- \dfrac{\lvert \Delta\theta\rvert}{\epsilon} \right) -\beta q$$ where $\theta, \epsilon, q \in \Theta$ is the preferences matrix for the whole population, is a potential game.
\end{proposition}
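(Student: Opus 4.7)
The plan is to exhibit an explicit potential function $\Phi$ on the finite-population state space and verify the Monderer--Shapley identity in its finite-population clever-payoff form: for any agent $i$ who unilaterally switches from strategy $a$ to $b$,
\[
F_i(b, x_{-i}) - F_i(a, x_{-i}) \;=\; \Phi(b, x_{-i}) - \Phi(a, x_{-i}).
\]
First, I would decompose the per-agent fitness into an idiosyncratic valuation $v_i(x_i) = \alpha_i \exp(-|x_i - \theta_i|/\epsilon_i)$, which is a function of agent $i$'s own action and her private preference $\theta_i$ alone, and a separable cost $\beta q_i$, viewed as the common posted per-period tariff times agent $i$'s own consumption.

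The natural candidate potential, motivated by this separation, is the population-level aggregate already written in the objective of \eqref{eq:opt_problem},
\[
\Phi(x) \;=\; \sum_{i\in P}\alpha_i\exp\!\left(-\frac{|x_i-\theta_i|}{\epsilon_i}\right)\; -\; \beta\sum_{i\in P} q_i(x_i).
\]
Verification is then a short bookkeeping step: under a unilateral deviation by agent $i$, every $v_j$ with $j\neq i$ is untouched because it depends only on $x_j$, so those terms telescope out of the difference; the cost sum likewise reduces to $-\beta[q_i(b)-q_i(a)]$; and what remains on both sides of the identity is precisely $v_i(b)-v_i(a)-\beta[q_i(b)-q_i(a)]$, which is the change in $F_i$ by construction. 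Once this identity is in place, the proposition follows directly from the definition of a potential game in the finite-population clever-payoff sense, and this is also the precise hook needed by Section~\ref{s: convergence} to invoke reversibility of the pairwise logit Markov chain.

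The main obstacle I anticipate is the apparent $Q_t$-dependence of the tariff, $\beta(Q_t)$, that appears in the utility function \eqref{utilidad}: if $\beta$ were genuinely a nonlinear function of total load, a unilateral switch by agent $i$ would generate an extra cross term of the form $Q_{-i}[\beta(Q')-\beta(Q)]$ that the separable potential above does not absorb, and weighted congestion games with nonlinear aggregate pricing are generically \emph{not} potential games. I would handle this by reading $\beta$ in the statement of the proposition as the period-specific clearing price faced by all users simultaneously, consistent with the suppressed argument in the expression $\beta q$; if a strictly $Q$-dependent form must be retained, the argument can be patched by a Rosenthal-style accumulator $C(Q)=\sum_{k}\beta(k\,\Delta q)\,\Delta q$ summed over the discrete consumption ladder $\mathbf{Q_{devices}}$, where the ladder discreteness imposed by the device-level constraint $\Delta q_i^n\in\mathbf{Q_{devices}}$ carries the telescoping argument through.
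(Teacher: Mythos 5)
Your proposal is correct and follows essentially the same route as the paper: the potential is taken to be the population-aggregate fitness, and the separability of the per-agent terms makes the unilateral-deviation (discrete partial derivative) difference collapse to agent $i$'s own payoff change. Your explicit handling of the $\beta(Q_t)$ dependence is actually more careful than the paper's own proof, which silently treats $\beta$ as a constant in the proposition's fitness function and also contains a sign slip in the telescoping step that your bookkeeping avoids.
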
 
 
\begin{proof}
	A finite-population game is a potential game if there is a finite-potential function $f^N$ such that for any agent $i$ the payoff is defined by its discrete partial derivative\cite[\S 11.5]{sandholm2010population} : $	F^N_i(x) = f^N(x) - f^N(x - \dfrac{1}{N}x_i)$. If $\Delta\theta = x - \theta$, and $\Theta$ is fixed for a given user, we assume that $F(\Theta) = f^N$:
	\begin{eqnarray}\label{eq:stochastic stability}
	F^N_i(x) = \exp \left(- \dfrac{\lvert \bm x - \bm \theta \rvert}{\bm \epsilon}\right) -\bm \beta \bm q - \nonumber\\ 
	\left(\exp \left(- \dfrac{\lvert \bm x - \bm \theta \rvert}{\bm \epsilon}\right) -\bm \beta \bm q - \exp \left(- \dfrac{\lvert x_i^n- \theta_i^n \rvert}{\epsilon} \right)-\beta q_i^n\right)\\
	= \exp \left(- \dfrac{\lvert x_i^n- \theta_i^n \rvert}{\epsilon} \right)-\beta q_i^n\label{eq:stochastic stability fin}
	\end{eqnarray}
\end{proof}
As the proposed demand management game is a potential game with direct exponential (logit) revision protocol, then the underlying Markov process has stochastic stability, with stationary distribution\cite[\S 11.5,12.2]{sandholm2010population}:
\begin{equation}
\mu_x^N = \dfrac{1}{K^N}\dfrac{N!}{\prod\limits_{k \in S}(Nx_k)!} \exp (\eta^{-1}f^N(x)).
\end{equation}

$K^N$ is determined by the requirement that $\sum \limits_{x \in X^N}\mu_x^N = 1$.

\section{Mechanism Implementation, Results and Analysis}

\subsection{Users' profile generation}

The only available information regarding users' consumption in Colombia is the aggregated monthly consumption per user sorted by strata, and the consumption distribution across all strata registered by the network operator. In order to input the information required by the model, an algorithm for generating users' profiles from the aggregate consumption is needed. This algorithm should randomly generate a plausible profile, taking into account assumptions about users' electric devices and their consumption, usage patterns and aggregated monthly consumption so that the profile meets the monthly consumption for a real-world user.

González \etal\cite{Gonzalez2017,Moreno2017} propose a model where the load profile for an user is formed by the composition of several devices' profiles added up. Each device profile is generated as an stochastic process with some assumptions about typical usage for characterizing the probability distribution, and assumptions about nominal power, established in a way that the total energy for the generated profile is in the expected range of measured consumption.

Colombian population (including electricity consumers) is clasified among 6 strata (1 to 6) where users belonging to stratum 1, 2 and 3 have financial subsidies in the tariff, and users in stratum 5 and 6 are billed with over costs which are used to subsidize the former users. Additionally, a typical consumer belonging to a higher strata has higher income and increased electricity consumption, even though the higher percentiles of a low stratum overlaps with the lower percentiles of higher strata.

In order to reflect this clustering in the multiagent system, an algorithm for generating the variation was proposed. The load profiles are considered as Markov chains, as in \cite{markovLoadprofiles}, where they conclude that the method is useful for recreating overall energy consumption but with somewhat inaccurate reflecting exact behavior for individual devices. In the present case however, those inaccuracies are not of concern as the actual individual behavior is not known. Taking as starting point a sample profile generated by González \etal, which reflects approximately the behavior and consumption levels for stratum 4, the algorithm uses maximum likelihood estimation to get an approximate guess for the transmission and emission matrices of the underlying Markov process that generates the profile. The process is assumed to have two states for each device--on and off-- and the transition probabilities for related to the ON state are then changed multiple times in small steps until the mean consumption obtained with the modified Markov process reflects the consumption for a given stratum.

Then, with the transition and emission matrices for all strata available, the desired strata distributed profiles are generated as input for the evolutionary model.

\begin{algorithm}
	\caption{Markov process for strata estimation}\label{markovEstimation}
	\begin{algorithmic}[1]
		\Procedure{matrixAdjust}{}
		\State \emph{top}:
		\If {$t == 1$} \textit{TRANStry} $\gets$ \textit{TRANSref}
			\If {stratum > 4}
				\State \textit{increaseCons} = \textit{True}
				\EndIf
			\If {stratum < 4}
				\State \textit{increaseCons} = \textit{False}
			\EndIf
		\Else~{change \textit{TRANStry} by \textit{delta}}
		\EndIf
		\State
		\State from \textit{i}=1 to \textit{numTest}
		\State generate Markov chain with \textit{TRANStry}
		\State \textit{testEnergy(i)} $\gets$ energy consumption for case $i$
		\State 
		\State \textit{avgEnergy} $\gets$ mean(\textit{testEnergy})
		\If {abs(\textit{goalEnergy} $-$ \textit{avgEnergy}) $\leq tol$}
		\State \textit{TRANSadjusted} $\gets$ \textit{TRANStry}
		\State \textbf{return} \textit{TRANSadjusted} and \textbf{break}
		\Else 
		\State update \textit{increaseCons} depending on \textit{goalEnergy, avgEnergy}
		\State	$t \gets t + 1$
		\State \textbf{go to:} \textit{top}
		\EndIf
		\EndProcedure
	\end{algorithmic}
\end{algorithm}

\subsection{Population game}

PDToolbox\cite{toolbox} was used as starting point, making modifications in order to enable finite-multipopulation revision protocols and the required additions to include appropriately the utility and fitness functions, and opinion dynamics into the model. Modified code is available via GitHub at \url{https://github.com/macortesgu/PDToolbox_matlab}.

then fig.~\ref{fig: OpdynLowCollab} represents how individuals reach a final opinion about their collaboration in the DR program, being 0 none cooperation, and 1 full agreement to the requests.


Running the game for a week with 6-hours periods\footnote{For a real-world application case shorter periods are recommended (5-15 min), as some short duration details for the load profiles and preferences are lost when using long periods aggregation. However, for the sake of simplicity and clarity the presented application case uses long periods.} with $P = 40$ agents with average weekly consumption of approximately 70 kWh, and valuation factor $f = 7$\footnote{Means that the users value the electricity used maximum at 7 times its cost $\beta$.} the resulting power allocation for the natural case--without incentives-- is shown in fig.~\ref{fig: power_allocation_noInc}, where the close resemblance of the original aggregated load profile achieved by the MAS is evident, though they are not exactly the same, mainly because the stochastic nature of the finite-population dynamic. Next, for the same population several scenarios are tested, for both financial and social incentives.

\begin{figure}
	\begin{center}
		\includegraphics[width=0.45\textwidth]{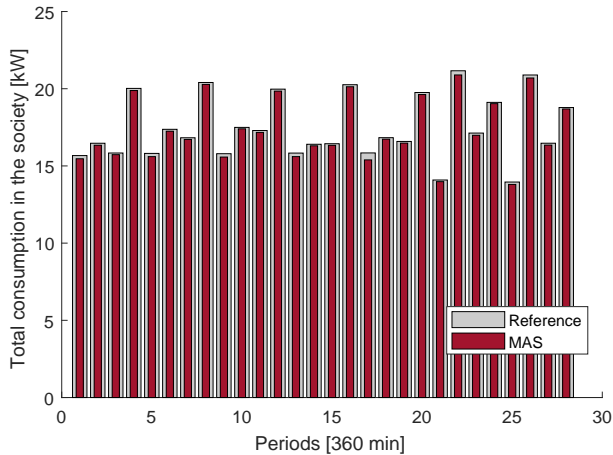}
		\caption{Power allocation for the 40 individuals population, modeled using Watts-Strogatz graph model.}
		\label{fig: power_allocation_noInc}
	\end{center}
\end{figure}

Social incentives are tested with flexibility $\varepsilon=0.3$ for a low-collaboration case, reflected in the willingness to adhere to the DR programs shown in fig.~\ref{fig: OpdynLowCollab}, a high-collaboration optimistic case related to opinions in fig.~\ref{fig: OpdynHighCollab} where almost everyone embraces the DR program, and a third case where users have high-collaboration and higher flexibility $\varepsilon=0.6$. The resulting power consumption evolution for the population is in fig.~\ref{fig: PowerEvoSocial} and the corresponding power allocation in figure~\ref{fig: PowerAllocSocial}. It can be seen the progressive power decrease when more users find the social incentive appealing and decide to enroll in the DR program, but also the longer times until stable state is reached. Due to the changes in utility, it is also more noticeable the stochastic nature of the MAS. For the high-collaboration case the power decrease is approximately $20\%$, and this could be considered as the maximum shift potential given the user's restrictions, as usually domestic consumers are highly inelastic and inflexible. In the figure~\ref{fig: PowerAllocSocial} is shown that incentives were applied in 11 of 28 periods, and the respective power during each period. The fitness evolution for one user in the case of low collaboration, disaggregated for each of her devices can be seen in figure~\ref{fig: FitnessEvo} where the step gains in fitness due to the incentive are clear for some of the devices.
  


\begin{figure}
	\begin{center}
		\includegraphics[width=0.35\textwidth]{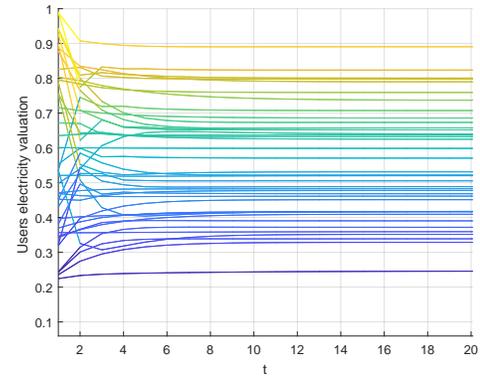}
		\caption{Users' electricity valuation (opinion dynamics), described by eq.~\ref{eq:bullo_stubborn}.}
		\label{fig: OpdynValuation}
	\end{center}
\end{figure}


\begin{figure}
	\begin{center}
		\includegraphics[width=0.35\textwidth]{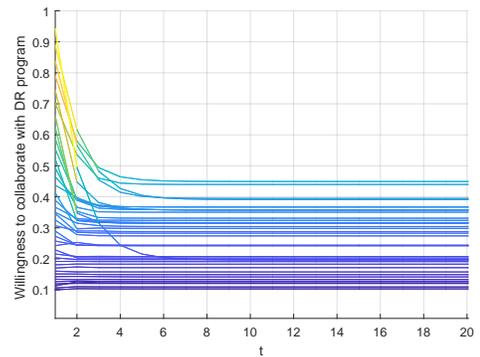}
		\caption{Users' collaboration, low collaboration scenario (opinion dynamics), described by eq.~\ref{eq:bullo_stubborn}.}
		\label{fig: OpdynLowCollab}
	\end{center}
\end{figure}

\begin{figure}
	\begin{center}
		\includegraphics[width=0.35\textwidth]{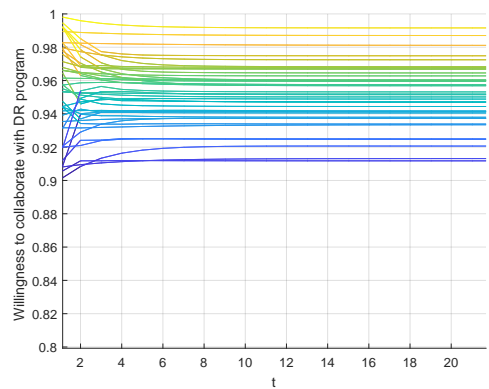}
		\caption{Users' collaboration, optimistic scenario (opinion dynamics), described by eq.~\ref{eq:bullo_stubborn}.}
		\label{fig: OpdynHighCollab}
	\end{center}
\end{figure}

\begin{figure}
	\begin{center}
		\includegraphics[width=0.45\textwidth]{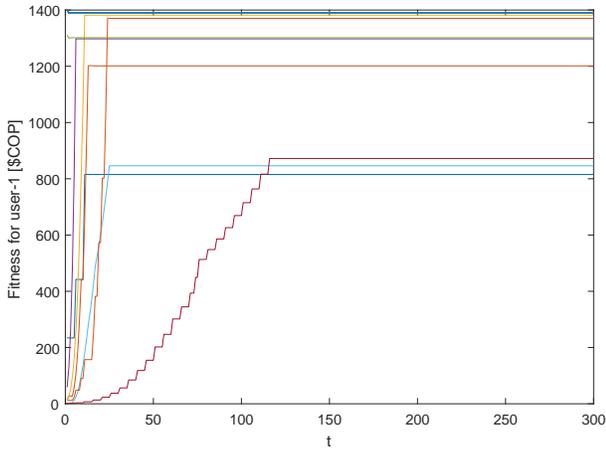}
		\caption{Fitness evolution for User 1 during the last period, low-collab scenario (opinion dynamics). Each line represents the fitness for each device the user has.}
		\label{fig: FitnessEvo}
	\end{center}
\end{figure}

\begin{figure}
	\begin{center}
		\includegraphics[width=0.45\textwidth]{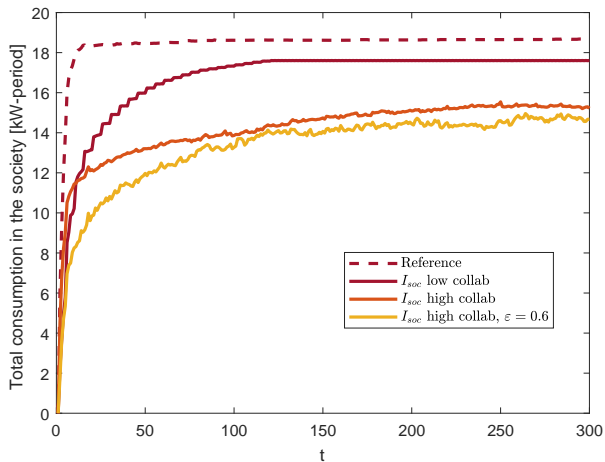}
		\caption{Power evolution for the 40 individuals population, for different incentives scenarios.}
		\label{fig: PowerEvoSocial}
	\end{center}
\end{figure}
\begin{figure}
	\begin{center}
		\includegraphics[width=0.48\textwidth]{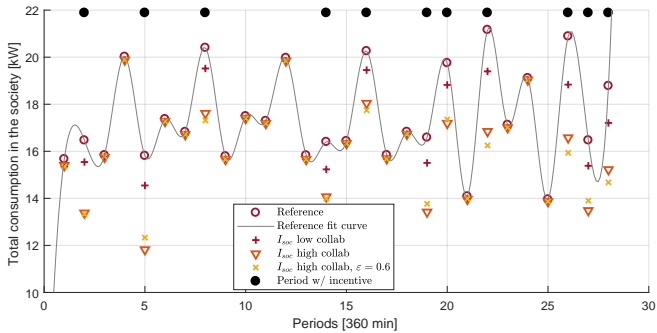}
		\caption{Final power allocation at each period for the 40 individuals population, for different incentives scenarios.}
		\label{fig: PowerAllocSocial}
	\end{center}
\end{figure}


Financial incentives are set for two cases with 26 of the 28 periods incentivized, when the compensation received by the users are $2\beta$ and $3\beta$ respectively. The corresponding power evolution for the population is shown in fig.~\ref{fig: PowerEvoFinancial} and the final power allocation per period in fig.~\ref{fig: PowerAllocFinancial}. The resulting reduction for the first case is roughly comparable to the one with low-collaboration social incentives, but the investment needed to fund the incentives can be substantially higher. The case where $I = 3\beta$ shows a noticeably higher decrease, where the collaborators are those whom value less their energy, and find in the incentive higher profit than following their consumption preferences. However, if the incentives need to be higher than the electricity cost in most cases it is no feasible that an interested agent as the network operator finds profitable the implementation of financial incentives, specially if her income is regulated and positive in the long term.

\begin{figure}
	\begin{center}
		\includegraphics[width=0.45\textwidth]{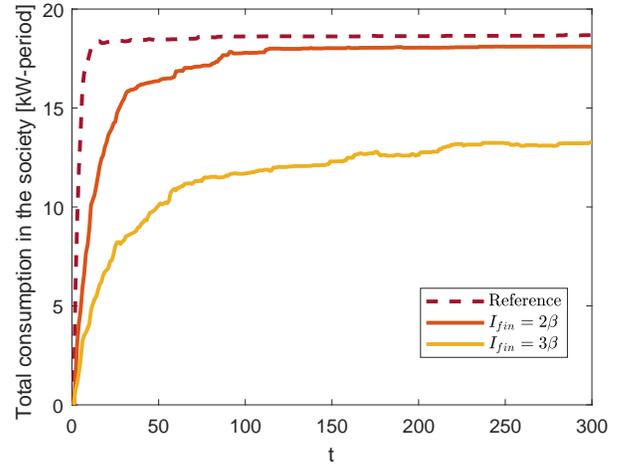}
		\caption{Power evolution for the 40 individuals population, for different financial incentives scenarios. $\beta$ is the electricity unit cost.}
		\label{fig: PowerEvoFinancial}
	\end{center}
\end{figure}
\begin{figure}
	\begin{center}
		\includegraphics[width=0.48\textwidth]{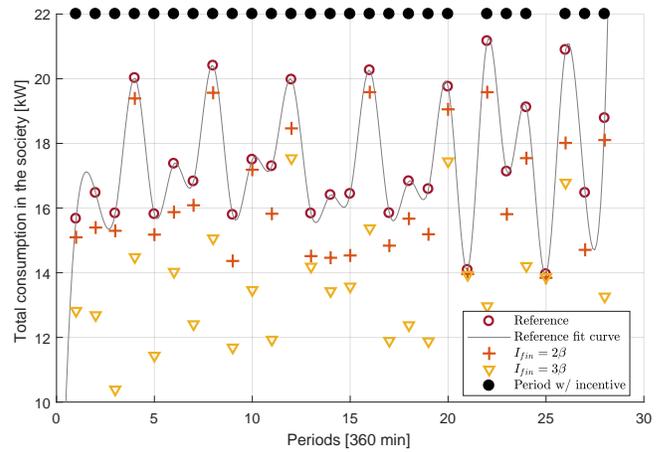}
		\caption{Final power allocation at each period for the 40 individuals population, for different financial incentives scenarios.}
		\label{fig: PowerAllocFinancial}
	\end{center}
\end{figure}

\section{Conclusions and Future Work}
Scenario-based predictions can be done by setting the opinion dynamics parameters with some desired values, and watching the effect this opinion has in the electricity consumption of users during a period, \textit{e.g.} a week.

An indirect revelation, incentive compatible mechanism for electricity demand management has been proposed. The mechanism maximizes domestic users' utility, achieved when their consumption matches their private preferences. Monetary and non-monetary incentives are also proposed for changing total consumed energy, whether it may be an increase or a decrease for certain consumption periods. In order to validate the mechanism, it has been proposed a MAS model of the users, that reproduce a given consumption profile in its natural state, before applying the desired incentives, helping to shift loads and lower peaks, within the technical restrictions of the users. A test scenario where social incentives are used to increase consumption efficiently constrained by users' purchasing power is studied by Cortés \etal in \cite{cortesDR2017}.


The proposed MAS is based on a finite population game and uses discrete energy levels that reflect the actual characteristics of commonly used home appliances and devices. It also has a fitness function that accounts for users subjective valuation of electricity, given the specific device that uses the energy, obtaining a more realistic approach to the problem.

The precise effect of the network characteristics, type of random network used in the model and stubborn or prominent agents has yet to be determined, as well as further adjustments contrasting the predictions obtained with the simulations with real-world scale deployment of incentives.

\section{Acknowledgments}

Partial financial support by Colciencias granted to one of the authors in the call No. 706-2015 “Convocatoria Nacional Jóvenes Investigadores e Innovadores año 2015” is gratefully acknowledged.



\bibliographystyle{plain}
\bibliography{../bib-file/Fullbiblio}

\begin{thebibliography}{10}

\bibitem{toolbox}
Carlos Barreto.
\newblock Population dynamics toolbox.
\newblock \url{https://github.com/carlobar/PDToolbox_matlab}, September 2014.

\bibitem{Barreto2014mechanism}
Carlos Barreto, Eduardo Mojica{-}Nava, and Nicanor Quijano.
\newblock Incentives-based mechanism for efficient demand response programs.
\newblock {\em CoRR}, abs/1408.5366, 2014.

\bibitem{Stackelberg2012}
Jiang Chen, Bo~Yang, and Xinping Guan.
\newblock Optimal demand response scheduling with stackelberg game approach
  under load uncertainty for smart grid.
\newblock In {\em Smart Grid Communications (SmartGridComm), 2012 IEEE Third
  International Conference on}, pages 546--551, Nov 2012.

\bibitem{crockett2017moral}
Molly~J Crockett, Jenifer~Z Siegel, Zeb Kurth-Nelson, Peter Dayan, and
  Raymond~J Dolan.
\newblock Moral transgressions corrupt neural representations of value.
\newblock {\em Nature Neuroscience}, 5 2017.

\bibitem{degroot1974consensus}
Morris~H DeGroot.
\newblock Reaching a consensus.
\newblock {\em Journal of the American Statistical Association},
  69(345):118--121, 1974.

\bibitem{friedkin1990social}
Noah~E Friedkin and Eugene~C Johnsen.
\newblock Social influence and opinions.
\newblock {\em Journal of Mathematical Sociology}, 15(3-4):193--206, 1990.

\bibitem{friedkin1999social}
Noah~E Friedkin and Eugene~C Johnsen.
\newblock Social influence networks and opinion change.
\newblock {\em Advances in Group Processes}, 16:1--29, 1999.

\bibitem{IncentivesFotouhiGhazvini2015}
Mohammad Ali~Fotouhi Ghazvini, Pedro Faria, Sergio Ramos, Hugo Morais, and Zita
  Vale.
\newblock Incentive-based demand response programs designed by asset-light
  retail electricity providers for the day-ahead market.
\newblock {\em Energy}, 82:786 -- 799, 2015.

\bibitem{Gintis2005}
Herbert Gintis, Samuel Bowles, Robert Boyd, and Ernst Fehr, editors.
\newblock {\em {Moral sentiments and material interests: The foundations of
  cooperation in economic life}}.
\newblock MIT Press, 2005.

\bibitem{Gonzalez2017}
Orlando~Alexander Gonz{\'{a}}lez, Andr{\'{e}}s Pavas, Samir S{\'{a}}nchez,
  {Andr{\'{e}}s Pavas}, and {Samir S{\'{a}}nchez}.
\newblock {Cuantificaci{\'{o}}n del Ahorro de Energ{\'{i}}a El{\'{e}}ctrica en
  Clientes Residenciales Mediante Acciones de Gesti{\'{o}}n de Demanda}.
\newblock {\em UIS Ingenier{\'{i}}as}, 10(2):1--8, 2017.

\bibitem{incentive_compatible2015}
T.~W. Haring, D.~S. Kirschen, and G.~Andersson.
\newblock Incentive compatible imbalance settlement.
\newblock {\em IEEE Transactions on Power Systems}, 30(6):3338--3346, Nov 2015.

\bibitem{hegselmann2002opinion}
Rainer Hegselmann, Ulrich Krause, et~al.
\newblock Opinion dynamics and bounded confidence models, analysis, and
  simulation.
\newblock {\em Journal of Artificial Societies and Social Simulation}, 5(3),
  2002.

\bibitem{markovLoadprofiles}
Fintan McLoughlin, Aidan Duffy, and Michael Conlon.
\newblock The generation of domestic electricity load profiles through markov
  chain modelling.
\newblock {\em Euro-Asian Journal of Sustainable Energy Development Policy},
  Vol. 3, 12 2010.

\bibitem{Stackelberg2014}
F.~L. Meng and X.~J. Zeng.
\newblock An optimal real-time pricing for demand-side management: A
  stackelberg game and genetic algorithm approach.
\newblock In {\em 2014 International Joint Conference on Neural Networks
  (IJCNN)}, pages 1703--1710, July 2014.

\bibitem{faithfulMechanismMhanna2016}
S.~Mhanna, G.~Verbič, and A.~C. Chapman.
\newblock A faithful distributed mechanism for demand response aggregation.
\newblock {\em IEEE Transactions on Smart Grid}, 7(3):1743--1753, May 2016.

\bibitem{bullo_stubborn}
A.~Mirtabatabaei, P.~Jia, N.~E. Friedkin, and F.~Bullo.
\newblock On the reflected appraisals dynamics of influence networks with
  stubborn agents.
\newblock In {\em 2014 American Control Conference}, pages 3978--3983, June
  2014.

\bibitem{mojica2015opinion}
E.~Mojica-Nava, C.~Barreto, and N.~Quijano.
\newblock Population games methods for distributed control of microgrids.
\newblock {\em IEEE Transactions on Smart Grid}, 6(6):2586--2595, Nov 2015.

\bibitem{Moreno2017}
Elkin Moreno, Orlando~A. Gonzalez, Andres Pavas, and Andr{\'{e}}s {Moreno,
  Elkin;Gonz{\'{a}}lez, Orlando A; Pavas}.
\newblock {Demand Flexibility Assessment for Residential Customers}.
\newblock In {\em IEEE Workshop on Power Electronics and Power Quality
  Applications - PEPQA}, page~5. IEEE, may 2017.

\bibitem{neko2015}
E.~Nekouei, T.~Alpcan, and D.~Chattopadhyay.
\newblock Game-theoretic frameworks for demand response in electricity markets.
\newblock {\em IEEE Transactions on Smart Grid}, 6(2):748--758, March 2015.

\bibitem{networks2011prettejohn}
Brenton~J Prettejohn, Matthew~J Berryman, and Mark~D McDonnell.
\newblock Methods for generating complex networks with selected structural
  properties for simulations: a review and tutorial for neuroscientists.
\newblock {\em Frontiers in computational neuroscience}, 5:11, 2011.

\bibitem{tutorialSNProskurnikov2017}
Anton~V. Proskurnikov and Roberto Tempo.
\newblock A tutorial on modeling and analysis of dynamic social networks. part
  i.
\newblock {\em Annual Reviews in Control}, pages~--, 2017.

\bibitem{Samad2016}
Tariq Samad, Edward Koch, and Petr Stluka.
\newblock {Automated Demand Response for Smart Buildings and Microgrids: The
  State of the Practice and Research Challenges}.
\newblock {\em Proceedings of the IEEE}, 104(4):726--744, apr 2016.

\bibitem{sandholm2010pairwise}
William~H Sandholm.
\newblock Pairwise comparison dynamics and evolutionary foundations for nash
  equilibrium.
\newblock {\em Games}, 1(1):3--17, 2010.

\bibitem{sandholm2010population}
William~H. Sandholm.
\newblock {\em Population games and evolutionary dynamics}.
\newblock Economic learning and social evolution. Cambridge, Mass. MIT Press,
  2010.

\bibitem{Sandholm2015}
William~H. Sandholm.
\newblock {Population Games and Deterministic Evolutionary Dynamics}.
\newblock In {\em Handbook of Game Theory with Economic Applications},
  volume~4, pages 703--778. Elsevier, jan 2015.

\bibitem{priceElasticityThimmapuram2013}
P.~R. Thimmapuram and J.~Kim.
\newblock Consumers' price elasticity of demand modeling with economic effects
  on electricity markets using an agent-based model.
\newblock {\em IEEE Transactions on Smart Grid}, 4(1):390--397, March 2013.

\bibitem{Wallace2015}
Chris Wallace and H.~Peyton Young.
\newblock {Stochastic Evolutionary Game Dynamics}.
\newblock In {\em Handbook of Game Theory with Economic Applications},
  volume~4, pages 327--380. Elsevier, jan 2015.

\bibitem{cortesDR2017}
Mateo Cortés~Guzmán y~Orlando González Vivas y Elkin Moreno Soto y Carlos
  Cusgüen y Eduardo Mojica Nava~y Fabio Pavas~Martinez.
\newblock Gestión de demanda en usuarios residenciales evaluando su impacto a
  través de un modelo de dinámica de opinión.
\newblock {\em Revista UIS Ingenierías}, 16(2):281--292, 2017.

\bibitem{handboookGTvol4}
H.~Peyton Young and Shmuel Zamir, editors.
\newblock {\em Handbook of Game Theory with Economic Applications}, volume~4.
\newblock Elsevier, 1 edition, 2015.

\bibitem{Stackelberg2016}
M.~Yu and S.~H. Hong.
\newblock A real-time demand-response algorithm for smart grids: A stackelberg
  game approach.
\newblock {\em IEEE Transactions on Smart Grid}, 7(2):879--888, March 2016.

\end{thebibliography}

\end{document}